	\newcommand{\diag}{{\rm diag}}
	\newcommand{\supp}{{\rm supp}\, }
	\def\hide #1 {}
	\long\def\longhide #1 {}
	\theoremstyle{plain}
	\newtheorem{theorem}{Theorem}[section]
	\newtheorem*{theorem*}{Theorem}
	\newtheorem{corollary}[theorem]{Corollary}
	\theoremstyle{definition}
	\newtheorem{remark}[theorem]{Remark}
	\newtheorem{definition}[theorem]{Definition}
	\title{On Graph Uncertainty Principle and Eigenvector Delocalization}
	\author{
		Elizaveta Rebrova\\Princeton University\\ Email: elre@princeton.edu
		\and 
		Palina Salanevich\\Utrecht University\\ Email: p.salanevich@uu.nl
	}
\begin{document}
		
		\maketitle
		\date{}
		
		\begin{abstract}
			
			Uncertainty principles present an important theoretical tool in signal processing, as they provide limits on the time-frequency concentration of a signal. In many real-world applications the signal domain has a complicated irregular structure that can be described by a graph. 
			In this paper, we focus on the global uncertainty principle on graphs and propose new connections between the uncertainty bound for graph signals and graph eigenvectors delocalization. We also derive uncertainty bounds for random $d$-regular graphs and provide numerically efficient upper and lower approximations for the uncertainty bound on an arbitrary graph. 
			
		\end{abstract}

		\section{Introduction and background}
		In many modern signal processing applications, such as social networks and co-authorship studies, brain imaging, epidemiology, and traffic networks, data has complicated relational structures. Such data can be naturally represented by signals defined on a (possibly weighted) graph, edges of which represent the relations between data points. The emerging field of \emph{signal processing on graphs} merges algebraic graph theory and computational harmonic analysis with the purpose to build tools for processing and analyzing such signals on graphs. Even though the theory of signal processing on graphs is very young, it develops rapidly \cite{shuman2013emerging, ortega2018graph}. 
		
		One of the cornerstone results in the classical time-frequency signal processing and harmonic analysis is the \emph{uncertainty principle}. It limits the degree to which a function can be simultaneously localized in time and frequency. Generalizations of such a principle to the graph setting, that is, establishing a trade-off between the spread of a signal in the vertex domain and in the frequency domain, is an important problem in signal processing on graphs. In particular, graph uncertainty bounds are used to
		study sampling properties on graphs \cite{perraudin2018global, tsitsvero2016signals} and to analyse space-frequency and wavelet decompositions \cite{agaskar2013spectral, shuman2013vertex}.
		
		\emph{In this paper, we study how the uncertainty principle for signals on graphs is linked  with the structure of the graph eigenvectors. More specifically, we (a) show that more refined localization results for the graph eigenvectors lead to better estimates of the uncertainty bound, and (b) propose the ways to quantify uncertainty bounds for particular graphs numerically.}

		To formally define uncertainty bound on graphs, we start with reviewing some relevant definitions. Let $G = (V, E)$ be a connected graph with the set of vertices~$V$, set of edges~$E$, and let $A$ be its adjacency matrix. 
		For simplicity, let us assume that $G$ is undirected and thus $A$ is symmetric. All the discussion (with slight modifications) can be extended to the case of directed graphs. 
		
		Let $\Lambda = \{\lambda_j\}_{j = 1}^{|V|}$ be the multiset of eigenvalues of the graph adjacency matrix $A$. Since $A$ is real and symmetric, there exists an orthonormal basis $\Phi = \{\varphi_\ell\}_{\ell \in \Lambda}$ consisting of eigenvectors~$\varphi_\ell$ of~$A$. 
		By slight abuse of notation, we identify a basis $\Phi$ with the matrix $\Phi$ that has basis vectors as its columns. 
		In some other works, $\Phi$ is chosen to be an orthonormal eigenbasis of the (normalized) graph Laplacian, see for instance~\cite{stankovic2019vertex}.
		
		Consider a signal $f:V\to \mathbb{C}$ defined on the vertex set $V$ of the graph $G$. Treating $f$ as a vector in $\mathbb{C}^{|V|}$, one can use representation of $f$ in the orthonormal eigenbasis $\Phi$ to define the graph Fourier transform (see, for example,~\cite{shuman2013emerging} for undirected and in \cite{singh2016graph, chen2015signal} for directed graphs).
		
		\begin{definition}\label{def_graph_fourier} We define the \emph{graph Fourier transform} $\hat{f}$  of a signal ${f: V\to \mathbb{C}}$ as $$\hat{f}(\ell) = \langle f, \varphi_\ell\rangle = \sum_{v\in V} f(v)\varphi_\ell^*(v), ~ \ell \in \Lambda.$$
			Since $\Phi$ is an orthonormal basis, the \emph{inverse Fourier transform} is given by $f(v) = \sum_{\ell\in \Lambda} \hat{f}(\ell)\varphi_\ell(v),~v\in V$.
		\end{definition}
		
		There are multiple ways to define uncertainty principle. In the continuous setup when $f\in L^2(\mathbb{R})$, the \emph{Heisenberg uncertainty principle} is defined using the notion of time and frequency spread \cite{folland1997uncertainty}, and for the finite dimensional signals $f\in \mathbb{R}^N$, it uses the sizes of the supports of $f$ and $\hat{f}$. Both of these approaches can be generalized in the graph setup, leading to local and global graph uncertainty principles \cite{perraudin2018global}. 
		\subsection{Local uncertainty principle on graphs}
		In classical harmonic analysis, for $f\in L^2(\mathbb{R})$, the Heisenberg uncertainty principle states that $\Delta_t^2 \Delta_{\omega}^2\ge \frac{1}{4}$, where $\Delta_t^2 = \min_{t_0\in\mathbb{R}}\frac{1}{\Vert f\Vert_2^2}\int_{\mathbb{R}}(t-t_0)^2|f(t)|^2dt$ denotes the \emph{time spread} of $f$ and $\Delta_{\omega}^2 = \frac{1}{\Vert f\Vert_2^2}\int_{\mathbb{R}}\omega^2|\hat{f}(\omega)|^2\frac{dt}{2\pi}$ denotes the \emph{frequency spread} of $f$ \cite{folland1997uncertainty}.
		Following this idea, Agaskar and Lu defined graph spread $\Delta_g^2(f)$ and spectral spread $\Delta_s^2(f)$ for graph-based signals in a similar way~\cite{agaskar2011uncertainty, agaskar2013spectral}. 
		\begin{definition}
			For $f:V\to\mathbb{R}$, $f\ne 0$, the \emph{graph spread} is $$\Delta_g^2(f) = \min_{u_0\in V}\frac{1}{\Vert f\Vert _2^2} f^{T}P_{u_0}^2f,$$ where $P_{u_0} = \diag\{d(u_0, v)\}_{v\in V}$ and $d(\cdot,\cdot)$ is the graph distance.
			The \emph{spectral spread} of $f$ is defined as $$\Delta_s^2 (f)= \frac{1}{\Vert f\Vert _2^2}\sum_{\lambda \in \Lambda}\lambda |\hat{f}(\lambda)|^2.$$
		\end{definition}
		It has been shown that the \emph{feasibility region} ${\mathcal{D} = \{(\Delta_g^2(f), \Delta_s^2(f))\}_{f:V\to \mathbb{C}}}$ for a connected graph $G$ is a bounded convex set \cite{agaskar2013spectral}. Moreover, bounds on $\Delta_g^2(f) \Delta_s^2(f)$ has been established for trees in~\cite{agaskar2011uncertainty} and for full graphs, stars, and Erd\H{o}s-R\'{e}nyi random graphs in \cite{agaskar2013spectral}. 
		\subsection{Global uncertainty principle on graphs}
		In this paper we are going to focus on a different, \emph{global version of the uncertainty principle} that generalizes the one proposed by Donoho and Stark~\cite{donoho1989uncertainty}. They showed that a signal and its Fourier transform cannot be simultaneously sparse. More precisely,  $\Vert f\Vert _0\Vert \hat{f}\Vert _0\ge N$, where $\Vert f\Vert _0$ denotes the support size of~$f$. A similar bound for graph signals, however, does not hold for all graphs $G$ and depends not only on the graph structure, but also on the choice of $\Phi$. For example, it is easy to check that for a full graph $K_N$, one can choose an eigenbasis $\Phi$ so that there exists a signal $f$ with ${\Vert f\Vert _0\Vert \hat{f}\Vert _0 = 2}$, while for a different eigenbasis $\tilde{\Phi}$, $\Vert f\Vert _0\Vert \hat{f}\Vert _0\ge N$ for all~$f$. At the same time, for a cycle graph $C_N$, any choice of $\Phi$ leads to the uncertainty bound $\Vert f\Vert _0\Vert \hat{f}\Vert _0\ge N/2$.
		
		Given the dependency of the graph uncertainty bounds on the graph structure and the eigenbasis choice, this work is motivated by the following two questions:
		\begin{enumerate}[leftmargin=*]
			\item For which $G$ do we have that $\Vert f\Vert _0\Vert \hat{f}\Vert _0\ge h(|V|)$ for any choice of $\Phi$ and some function $h$, s.t. $h(n)\to_{n\to \infty} \infty$?
			\item For given $G$ and $\Phi$, how can one estimate the bound for $\Vert f\Vert _0\Vert \hat{f}\Vert _0$?
		\end{enumerate}
		
		The first question has been partially answered in \cite{teke2017uncertainty}. In particular, the following connection between the uncertainty bound on a graph $G$ and the properties of $\Phi$ has been shown. 
		\begin{theorem}[\cite{teke2017uncertainty}]\label{thm_glob_unc_inf_norm}
			Let $G = (V, E)$ be a weighted graph and $\Phi$ be an orthonormal basis of the eigenvectors of its adjacency matrix $A$. Then, for any graph signal~$f: V\to \mathbb{C}$, $$\Vert f\Vert _0 \Vert \hat{f}\Vert _0\ge \Vert \Phi\Vert _{\infty}^{-2}$$
			where $\displaystyle \Vert \Phi\Vert _{\infty} = \max_{\ell\in \Lambda,~i\in V}|\varphi_{\ell}(i)|$.
		\end{theorem}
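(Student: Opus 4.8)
The plan is to derive this from the classical support–size inequality that underlies every such ``generalized uncertainty principle'': if a nonzero vector $f$ satisfies $f = \sum_{\ell \in \Lambda} \hat f(\ell)\varphi_\ell$ with $\hat f$ supported on a set $S \subseteq \Lambda$, and $f$ itself supported on a set $T \subseteq V$, then $|S|\cdot|T|$ cannot be too small relative to how ``spread out'' the entries of the matrix $\Phi$ are. Concretely, I would fix a nonzero $f$, let $T = \supp f$ and $S = \supp \hat f$, so $\Vert f\Vert_0 = |T|$ and $\Vert \hat f\Vert_0 = |S|$, and aim to show $|S|\,|T| \ge \Vert\Phi\Vert_\infty^{-2}$.

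First I would write, for each vertex $i \in T$, the reconstruction formula $f(i) = \sum_{\ell \in S} \hat f(\ell)\varphi_\ell(i)$, which holds because $\hat f$ vanishes off $S$. Applying the triangle inequality and then Cauchy–Schwarz over the index set $S$ gives $|f(i)| \le \sum_{\ell \in S} |\hat f(\ell)|\,|\varphi_\ell(i)| \le \Vert\Phi\Vert_\infty \sum_{\ell \in S}|\hat f(\ell)| \le \Vert\Phi\Vert_\infty \sqrt{|S|}\,\Vert \hat f\Vert_2$. By Parseval (the orthonormality of $\Phi$) we have $\Vert\hat f\Vert_2 = \Vert f\Vert_2$, so $|f(i)|^2 \le \Vert\Phi\Vert_\infty^2\,|S|\,\Vert f\Vert_2^2$ for every $i \in T$. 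Now I would sum this inequality over the $|T|$ vertices in $T$; since $f$ vanishes outside $T$, the left-hand side sums to exactly $\Vert f\Vert_2^2$, yielding $\Vert f\Vert_2^2 \le |T|\cdot \Vert\Phi\Vert_\infty^2\,|S|\,\Vert f\Vert_2^2$. Dividing by $\Vert f\Vert_2^2 > 0$ gives $1 \le \Vert\Phi\Vert_\infty^2\,|S|\,|T|$, i.e. $\Vert f\Vert_0\,\Vert\hat f\Vert_0 = |T|\,|S| \ge \Vert\Phi\Vert_\infty^{-2}$, as desired.

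There is really no serious obstacle here — the argument is a clean two-step estimate (Cauchy–Schwarz in the frequency variable, then summation over the vertex support) glued together by Parseval. The one point worth stating carefully is the symmetric role of $f$ and $\hat f$: I chose to bound $|f(i)|$ pointwise using the inverse transform, but one could equally bound $|\hat f(\ell)|$ using the forward transform $\hat f(\ell) = \sum_{v \in T} f(v)\varphi_\ell^*(v)$ and sum over $\ell \in S$; both routes give the same bound, which reflects the fact that $\Vert\Phi\Vert_\infty$ is invariant under $\Phi \mapsto \bar\Phi$ (equivalently, under swapping the vertex and frequency domains). I would also remark that strictly speaking one should treat the degenerate case $f = 0$ separately, but the statement is vacuous there since $\Vert f\Vert_0 = 0$ and the inequality is interpreted as $0 \cdot \Vert\hat f\Vert_0 \ge \Vert\Phi\Vert_\infty^{-2}$ failing only formally; in practice the theorem is understood for $f \ne 0$.
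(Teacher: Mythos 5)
Your proof is correct, and it is essentially the same argument the paper relies on: a single Cauchy--Schwarz estimate combined with Parseval, applied to a transform restricted to the supports of $f$ and $\hat f$. (The paper cites this theorem from \cite{teke2017uncertainty} and instead proves the sharper Frobenius-norm generalization, Theorem~\ref{thm_glob_unc_Frob_norm}, by the same computation run on the forward transform and stopping before replacing $|\varphi_\ell(i)|$ by $\Vert\Phi\Vert_\infty$; your inverse-transform route is the mirror image of this, as you yourself note, and yields the sup-norm bound directly.)
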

		
		This bound is only tight if $\Phi$ consists of the elements with the same absolute values, although in some models it gives correct asymptotic dependence on the size of the graph. In Section \ref{sec_uncertainty_via_delocalization}, we establish a better link between graph uncertainty bound and the distribution of the eigenvectors coefficients. In Section~\ref{sec_d_regular_graphs}, we discuss the uncertainty bound on a well-studied model of random $d$-regular graphs. In Section~\ref{sec_numerical_bounds}, we propose efficiently computable lower and upper bounds for the global uncertainty bound on an arbitrary graph and test their performance on different random graph models.

		\section{Global graph uncertainty and eigenvector delocalization}\label{sec_uncertainty_via_delocalization}
		
		First, we give a generalization of Theorem~\ref{thm_glob_unc_inf_norm} that quantifies global uncertainty via the subblocks of the eigenvector matrix.
		
		\begin{theorem}\label{thm_glob_unc_Frob_norm}
			Let $G = (V, E)$ be a weighted graph and $\Phi$ be an orthonormal eigenbasis of its adjacency matrix $A$. Then, for any graph signal~$f: V\to \mathbb{C}$, 
			\begin{align*}
				\Vert f\Vert _0 \Vert \hat{f}\Vert _0\ge \min_{S\subset \Lambda, K\subset V}\left\lbrace |K| |S| \colon \Vert \Phi_{K\times S}\Vert _F\ge 1\right\rbrace,
			\end{align*}
			where  $\Phi_{K\times S}$ is the restriction of the matrix $\Phi$ to the rows indexed by $K$ and columns indexed by $S$, and $\Vert \cdot\Vert _F$ denotes the Frobenius norm.
		\end{theorem}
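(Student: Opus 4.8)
The plan is to prove the inequality by showing that the specific pair $S=\supp\hat f$, $K=\supp f$ is always admissible for the minimization on the right-hand side (throughout we assume $f\neq 0$, as is standard for uncertainty statements). Indeed, once we know $\Vert\Phi_{K\times S}\Vert_F\ge 1$ for this pair, the minimum is bounded above by $|K|\,|S|=\Vert f\Vert_0\,\Vert\hat f\Vert_0$, which is exactly the claim.

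The first step is to turn the two support assumptions into matrix identities. Let $f_K\in\C^{|K|}$ be the restriction of $f$ to $K=\supp f$ and $\hat f_S\in\C^{|S|}$ the restriction of $\hat f$ to $S=\supp\hat f$. Because $f$ vanishes outside $K$, Definition~\ref{def_graph_fourier} reads, for $\ell\in S$, $\hat f(\ell)=\sum_{v\in K} f(v)\varphi_\ell^*(v)$, i.e.
\begin{align*}
\hat f_S=\Phi_{K\times S}^{*}\,f_K .
\end{align*}
Because $\hat f$ vanishes outside $S$, the inversion formula reads, for $v\in K$, $f(v)=\sum_{\ell\in S}\hat f(\ell)\varphi_\ell(v)$, i.e.\ $f_K=\Phi_{K\times S}\,\hat f_S$. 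Substituting the first identity into the second,
\begin{align*}
f_K=\Phi_{K\times S}\,\Phi_{K\times S}^{*}\,f_K ,
\end{align*}
so the nonzero vector $f_K$ (nonzero since $K=\supp f$ and $f\neq 0$) is an eigenvector with eigenvalue $1$ of the positive semidefinite matrix $M:=\Phi_{K\times S}\,\Phi_{K\times S}^{*}$.

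It remains to deduce the Frobenius bound. Since $M\succeq 0$ and $1$ is an eigenvalue of $M$, we get $\Vert\Phi_{K\times S}\Vert_F^2=\Tr(M)\ge\lambda_{\max}(M)\ge 1$; equivalently, $\Vert\Phi_{K\times S}\Vert_F\ge\Vert\Phi_{K\times S}\Vert_{\mathrm{op}}=\sqrt{\lambda_{\max}(M)}\ge 1$. Hence $(S,K)$ is admissible, and the bound follows.

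I do not expect a genuine obstacle: the content is just the observation that simultaneous support constraints pin $f_K$ as a fixed point of $\Phi_{K\times S}\Phi_{K\times S}^{*}$, after which the Frobenius lower bound is immediate from $\Tr\ge\lambda_{\max}$ for positive semidefinite matrices. The only thing to be careful about is bookkeeping the complex conjugation and the orientation of $\Phi_{K\times S}$ versus $\Phi_{K\times S}^{*}$ when passing from Definition~\ref{def_graph_fourier} to the matrix identities. It is also worth recording that the theorem genuinely strengthens Theorem~\ref{thm_glob_unc_inf_norm}: for any admissible $(K,S)$ one has $1\le\Vert\Phi_{K\times S}\Vert_F^2\le |K|\,|S|\,\Vert\Phi\Vert_\infty^2$, so the minimum on the right-hand side is at least $\Vert\Phi\Vert_\infty^{-2}$.
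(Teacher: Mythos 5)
Your proof is correct, and while it follows the same overall skeleton as the paper's --- show that the pair $K=\supp(f)$, $S=\supp(\hat f)$ is admissible for the minimization, so that the minimum is at most $|K|\,|S|$ --- the way you establish admissibility is genuinely different. The paper simply expands $\Vert\hat f\Vert_2^2=\sum_{\ell\in S}\bigl|\sum_{i\in K}f(i)\varphi_\ell^*(i)\bigr|^2$ and applies Cauchy--Schwarz termwise to get $1=\Vert\hat f\Vert_2^2\le\Vert f\Vert_2^2\,\Vert\Phi_{K\times S}\Vert_F^2$, a one-line computation. You instead observe that the two support constraints force $f_K=\Phi_{K\times S}\Phi_{K\times S}^*f_K$, so that $1$ is an eigenvalue of the positive semidefinite matrix $M=\Phi_{K\times S}\Phi_{K\times S}^*$ and hence $\Vert\Phi_{K\times S}\Vert_F^2=\Tr(M)\ge\lambda_{\max}(M)\ge1$. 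Your matrix identities are consistent with the paper's convention that rows of $\Phi$ are indexed by vertices and columns by eigenvalues, and the eigenvalue argument is sound. What your route buys is a strictly stronger intermediate conclusion: you actually show $\Vert\Phi_{K\times S}\Vert_{\mathrm{op}}\ge1$, so the theorem holds with the Frobenius norm replaced by the operator norm in the constraint, which can only increase the right-hand side (the feasible set shrinks). The paper's route is shorter and needs nothing beyond Cauchy--Schwarz, but does not yield this refinement. Your closing remark that the Frobenius bound dominates the supremum bound of Theorem~\ref{thm_glob_unc_inf_norm} is also correct and matches the paper's discussion.
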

		\begin{proof}
			Without loss of generality, assume that the signal $f$ is normalized so that $\Vert f\Vert _2 = 1$. Then, since $\Phi$ forms an orthonormal basis $\Vert \hat{f}\Vert _2 = \Vert f\Vert _2 = 1$. Denote $K = \supp(f)$, $S = \supp(\hat{f})$. Using Cauchy-Schwartz inequality, we obtain
			\begin{equation*}
				\begin{split}
					\Vert \hat{f}\Vert _2^2 & =  \sum_{\ell\in S}\left|\sum_{i\in K} f(i)\varphi_{\ell}^*(i) \right|^2\le \sum_{\ell\in S}\left[\sum_{i\in K} |f(i)|^2 \sum_{i\in K}|\varphi_{\ell}^*(i)|^2 \right]\\ 
					& = \sum_{\ell\in S, i\in K}|\varphi_{\ell}(i)|^2 = \Vert \Phi_{K\times S}\Vert ^2_F.
				\end{split}
			\end{equation*}
			\noindent That is, $\Vert \Phi_{\supp(\hat{f})\times \supp(f)}\Vert _F\geq 1$ for any graph signal ${f:V\to \mathbb{C}}$. 
			The claim follows.
		\end{proof}
		
		Note that both Theorem~\ref{thm_glob_unc_inf_norm} and Theorem~\ref{thm_glob_unc_Frob_norm} link the global uncertainly bound on a graph to the distribution of the coefficients of its eigenvectors. Informally, good uncertainty bound requires the absolute values of an eigenvector entries to be of similar order. This property is also known as the \emph{delocalization} of (graph) eigenvectors. 
		
		The eigenvector delocalization is one of the central topics in modern random matrix theory. It is possible to show that certain classes of random matrices exhibit eigenvector universality: their eigenvector matrices are distributed similarly to those of the Gaussian Orthogonal Ensemble (i.e., like a sample of Haar measure on the orthogonal group $O(n)$), see, e.g., \cite{anderson2010introduction,o2016eigenvectors}. Various measures of distribution similarity were studied; among the most popular ones in non-asymptotic context are the following
		\begin{itemize}[leftmargin=*]
			\item \emph{Supremum delocalization} (studied in a variety of papers staring from \cite{erdHos2009semicircle})  gives a bound of the largest coordinates $\Vert\varphi_{\ell} \Vert_{\infty}$ of the eigenvectors in $\Phi$;
			\item \emph{No-gap delocalization} (first introduced in \cite{rudelson2016no}) ensures that any sufficiently large set of coordinates of an eigenvector carries a non-negligible portion of its Euclidean norm.
		\end{itemize}
		These two types of delocalization measure complementary properties of the distribution of an eigenvector coefficients, as one of them rules out peaks while the other one rules out gaps. It is easy to see that neither of these two properties implies the other. For example, presence of a few large outlier entries in the matrix $\Phi$ fails supremum delocalization but a no-gap localization can still hold.
		
		While Theorem~\ref{thm_glob_unc_inf_norm} gives a direct link between supremum delocalization and the uncertainty bound (see Corollary~\ref{cor_dense_d-regular_uncertainty} below), Theorem~\ref{thm_glob_unc_Frob_norm} can be used to deduce the uncertainty bound from other types of delocalization. 
		
		\begin{corollary}\label{cor: no-gap deloc} (Uncertainty from no-gap delocalization)
			Let $G = (V, E)$ be a graph. Suppose there exists a function $g\colon \mathbb{R}^2\to \mathbb{R}$, such that for any $\varepsilon \in (0,1)$ and any unit eigenvector $\varphi$ of $G$,
			$\Vert (\varphi)_K \Vert^2 \geq \varepsilon$ implies $\vert K\vert \ge g(\varepsilon, \vert V \vert)$ for any $K \subset V$. Then, for any signal $f$ defined on $G$,  $$\Vert f \Vert_0 \Vert \hat{f} \Vert_0 \geq \min_{S\subset \Lambda} g\left(\frac{1}{\vert S \vert}, \vert V \vert\right) \vert S \vert.$$
		\end{corollary}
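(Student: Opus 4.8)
The plan is to derive this directly from Theorem~\ref{thm_glob_unc_Frob_norm} by combining it with a one-line averaging argument over the columns of the restricted eigenvector matrix. By Theorem~\ref{thm_glob_unc_Frob_norm} it suffices to show that every pair $(K,S)$ with $K\subset V$, $S\subset\Lambda$ and $\Vert \Phi_{K\times S}\Vert_F\ge 1$ satisfies $|K||S|\ge g(1/|S|,|V|)\,|S|$, i.e.\ $|K|\ge g(1/|S|,|V|)$; minimizing over all such feasible pairs then yields the claimed bound, since $\min_{(K,S)}g(1/|S|,|V|)\,|S|\ge\min_{S\subset\Lambda}g(1/|S|,|V|)\,|S|$.

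First I would unfold the Frobenius norm exactly as in the proof of Theorem~\ref{thm_glob_unc_Frob_norm}, namely
\begin{equation*}
1\le \Vert \Phi_{K\times S}\Vert_F^2=\sum_{\ell\in S}\sum_{i\in K}|\varphi_\ell(i)|^2=\sum_{\ell\in S}\Vert(\varphi_\ell)_K\Vert^2 .
\end{equation*}
Since the right-hand side is a sum of $|S|$ nonnegative terms that is at least $1$, by the pigeonhole (averaging) principle there is an index $\ell^*\in S$ with $\Vert(\varphi_{\ell^*})_K\Vert^2\ge 1/|S|$. The vector $\varphi_{\ell^*}$ is a unit eigenvector of $G$, so applying the no-gap delocalization hypothesis with $\varepsilon=1/|S|$ gives $|K|\ge g(1/|S|,|V|)$, which is precisely what is needed. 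Feeding this back into Theorem~\ref{thm_glob_unc_Frob_norm} finishes the argument.

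The only real subtlety I anticipate is the boundary case $|S|=1$, where $\varepsilon=1/|S|=1$ lies outside the range $(0,1)$ in which the hypothesis is stated. Here one argues instead that $\Vert(\varphi_{\ell^*})_K\Vert^2=1$ (as $\varphi_{\ell^*}$ is a unit vector), so $\Vert(\varphi_{\ell^*})_K\Vert^2\ge\varepsilon$ holds for every $\varepsilon\in(0,1)$, forcing $|K|\ge\sup_{\varepsilon\in(0,1)}g(\varepsilon,|V|)$; with the natural convention $g(1,|V|):=\lim_{\varepsilon\to 1^-}g(\varepsilon,|V|)$ (any reasonable no-gap bound is monotone in $\varepsilon$) this again reads $|K|\ge g(1/|S|,|V|)$, so the statement also covers $|S|=1$. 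Beyond this edge case I expect no genuine obstacle: the proof is essentially ``Theorem~\ref{thm_glob_unc_Frob_norm} plus pigeonhole'', and the substance of the corollary is just the translation of the combinatorial feasibility condition $\Vert\Phi_{K\times S}\Vert_F\ge1$ into the analytic delocalization language.
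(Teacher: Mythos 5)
Your proof is correct and follows essentially the same route as the paper's: expand $\Vert\Phi_{K\times S}\Vert_F^2$ as $\sum_{\ell\in S}\Vert(\varphi_\ell)_K\Vert^2$, pigeonhole to find one eigenvector with $\Vert(\varphi_\ell)_K\Vert^2\ge 1/|S|$, and invoke the hypothesis with $\varepsilon=1/|S|$ before applying Theorem~\ref{thm_glob_unc_Frob_norm}. Your extra care about the $|S|=1$ boundary case (where $\varepsilon=1$ falls outside $(0,1)$) is a detail the paper glosses over, and your squared norm $\Vert(\varphi_\ell)_K\Vert^2\ge 1/|S|$ is actually the correct form of the inequality that the paper's proof states without the square.
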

		\begin{proof}
			Suppose that for $K\subset V$ and $S\subset\Lambda$, ${\Vert \Phi_{K\times S} \Vert_{F}^2\geq 1}$. Then there exists $\ell\in S$, such that $\Vert(\varphi_\ell)_K\Vert\geq\frac{1}{\vert S\vert}$. Choosing $\varepsilon = \frac{1}{\vert S \vert}$, we deduce that $\vert K\vert \geq  g\left(\frac{1}{\vert S \vert}, \vert V\vert \right)$, and the claim of the corollary follows from  Theorem~\ref{thm_glob_unc_Frob_norm}.
		\end{proof}
		
		Note that no-gap delocalization results are restricted to individual columns in the eigenvector matrix and do not address how many eigenvectors can be localized on the same index set. New non-asymptotic delocalization results addressing submatrices of the eigenvector matrix would lead to the tighter bounds for the uncertainty. An example of a delocalization of this type is \emph{box delocalization} we define in the following corollary.
		
		\begin{corollary}\label{cor: box deloc}(Uncertainty from box delocalization) 
			Suppose that the eigenvectors of a graph ${G = (V, E})$ have box delocalization. That is, for ${\varepsilon \in (0,1)}$, ${k\in \{1, \dots \vert V \vert\}}$, and any coordinate subset $K\subset V$ with $\vert K \vert \le k$, we have $\left\vert \left\lbrace \ell \in \Lambda \colon \Vert (\varphi_\ell)_K \Vert_2^2 \geq\varepsilon\right\rbrace \right\vert \leq L(k, \varepsilon)$, for some function $L$. Then, for any signal $f$ defined on $G$, $$\Vert f \Vert_0 \Vert \hat{f} \Vert_0 \geq \max_{k, \varepsilon} \left\lbrace k L(k, \varepsilon) \colon \varepsilon L(k, \varepsilon) < 1\right\rbrace.$$
		\end{corollary}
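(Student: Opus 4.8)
The plan is to deduce this from Theorem~\ref{thm_glob_unc_Frob_norm}, following the same scheme as the proof of Corollary~\ref{cor: no-gap deloc}. By that theorem it is enough to prove that every pair $(K,S)$ with $K\subset V$, $S\subset\Lambda$ and $\Vert\Phi_{K\times S}\Vert_F\ge 1$ satisfies $|K|\,|S|\ge kL(k,\varepsilon)$ for every pair $(k,\varepsilon)$ with $\varepsilon\in(0,1)$ and $\varepsilon L(k,\varepsilon)<1$; maximizing the right-hand side over such $(k,\varepsilon)$ then gives the corollary. Throughout I would use the identity $\Vert\Phi_{K\times S}\Vert_F^2=\sum_{\ell\in S}\Vert(\varphi_\ell)_K\Vert_2^2$ that already appears inside the proof of Theorem~\ref{thm_glob_unc_Frob_norm}.

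Fix an admissible $(K,S)$ and a valid $(k,\varepsilon)$. The main step is to split the columns of $\Phi_{K\times S}$ at the box-delocalization threshold $\varepsilon$: let $S_{\mathrm h}=\{\ell\in S:\Vert(\varphi_\ell)_K\Vert_2^2\ge\varepsilon\}$ and $S_{\mathrm l}=S\setminus S_{\mathrm h}$. The light columns are harmless, $\sum_{\ell\in S_{\mathrm l}}\Vert(\varphi_\ell)_K\Vert_2^2<\varepsilon|S_{\mathrm l}|\le\varepsilon|S|$, so from $\sum_{\ell\in S}\Vert(\varphi_\ell)_K\Vert_2^2\ge 1$ we get that the heavy columns carry mass $\sum_{\ell\in S_{\mathrm h}}\Vert(\varphi_\ell)_K\Vert_2^2>1-\varepsilon|S|$; in particular $S_{\mathrm h}$ is nonempty whenever $\varepsilon|S|<1$. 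For the count of heavy columns I would invoke box delocalization: when $|K|\le k$ the hypothesis gives $|S_{\mathrm h}|\le L(k,\varepsilon)$ directly, and when $|K|>k$ I would cover $K$ by $\lceil|K|/k\rceil$ blocks of size at most $k$, apply the hypothesis to each block, and union-bound --- this is the route by which the factor $k$ in $kL(k,\varepsilon)$ enters. Combining the lower bound on the heavy mass with the upper bound on $|S_{\mathrm h}|$, and using $\varepsilon L(k,\varepsilon)<1$ to absorb the light-column term, one is left with a lower bound on $|S|$, and hence on $|K|\,|S|$, of the required shape.

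The step I expect to be the real obstacle is turning the two one-sided estimates (``few heavy columns'' and ``small light-column mass'') into the clean product inequality $|K|\,|S|\ge kL(k,\varepsilon)$: the crude bound $\sum_{\ell\in S_{\mathrm h}}\Vert(\varphi_\ell)_K\Vert_2^2\le|S_{\mathrm h}|$ is wasteful because a heavy column may put essentially all of its norm on $K$, so one probably has to use monotonicity of $L(k,\varepsilon)$ in $\varepsilon$ to reduce to the extremal case $|K|=k$, or --- more robustly --- replace the single threshold by the layer-cake estimate $\sum_{\ell\in S}\Vert(\varphi_\ell)_K\Vert_2^2=\int_0^1\bigl|\{\ell\in S:\Vert(\varphi_\ell)_K\Vert_2^2>t\}\bigr|\,dt\le\int_0^1\min\bigl(|S|,L(|K|,t)\bigr)\,dt$ and optimize over $t$. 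For the single-threshold statement as written, the constraint $\varepsilon L(k,\varepsilon)<1$ is precisely the ingredient that should make the bookkeeping close.
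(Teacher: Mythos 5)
Your reduction to Theorem~\ref{thm_glob_unc_Frob_norm} and the identity $\Vert\Phi_{K\times S}\Vert_F^2=\sum_{\ell\in S}\Vert(\varphi_\ell)_K\Vert_2^2$ are the right starting point, and the heavy/light split of the columns at level $\varepsilon$ is a natural way to bring in the hypothesis. But the proposal does not close, and you say so yourself: the step you flag as ``the real obstacle'' is the entire content of the corollary, and nothing in your write-up resolves it. Concretely, your two estimates combine only into $L(k,\varepsilon)\ge\vert S_{\mathrm h}\vert\ge\sum_{\ell\in S_{\mathrm h}}\Vert(\varphi_\ell)_K\Vert_2^2>1-\varepsilon\vert S\vert$, which is vacuous as soon as $L(k,\varepsilon)\ge1$ and therefore yields no lower bound on $\vert S\vert$, let alone the product bound $\vert K\vert\,\vert S\vert\ge kL(k,\varepsilon)$. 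The block-covering idea for $\vert K\vert>k$ also does not work as described: a column with mass $\ge\varepsilon$ on $K$ is only guaranteed mass $\ge\varepsilon/\lceil\vert K\vert/k\rceil$ on some block, so the union bound invokes $L$ at a different threshold and in any case produces an upper bound on $\vert S_{\mathrm h}\vert$ that grows with $\vert K\vert$ --- it is not the mechanism by which the factor $k$ enters. The layer-cake and monotonicity remarks are plausible directions but are not carried out.

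For comparison, the paper's proof is two lines: fix $k$ and $\varepsilon$ with $\varepsilon L(k,\varepsilon)<1$, take $\vert K\vert=k$, and assert that $\vert S\vert\le L(k,\varepsilon)$ forces $\Vert\Phi_{K\times S}\Vert_F^2<1$, so that any $(K,S)$ feasible for Theorem~\ref{thm_glob_unc_Frob_norm} must have $\vert S\vert>L(k,\varepsilon)$ and hence $\vert K\vert\,\vert S\vert\ge kL(k,\varepsilon)$. That asserted implication is exactly the statement you were unable to prove; it is immediate only when every column of $\Phi_{K\times S}$ is light (then the total is $<\varepsilon\vert S\vert\le\varepsilon L(k,\varepsilon)<1$), whereas under the hypothesis as literally stated up to $L(k,\varepsilon)$ columns may each place mass close to $1$ on $K$. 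So you have correctly located the crux of the argument --- your diagnosis of the difficulty is sharper than the paper's own treatment of it --- but the verdict on the proposal must be that it is incomplete: the decisive inequality is never established, and the cases $\vert K\vert\ne k$ are not handled.
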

		\begin{proof}
			For each ${k\in \{1, \dots \vert V \vert\}}$, let us fix ${\varepsilon \in (0,1)}$, such that $\varepsilon L(k, \varepsilon) < 1$. Then, for each fixed  $K\subset V$ with $\vert K \vert = k$, the box delocalization assumption implies that for any $S\subset \Lambda$ with $\vert S \vert \le L(k, \varepsilon)$, we have $\Vert \Phi_{K\times S} \Vert_{F}^2<1$. Thus, $\Vert \Phi_{K\times S} \Vert_{F}^2\geq 1$ implies that $\vert S \vert \vert K \vert \geq kL(k, \varepsilon)$, and the claim follows from Theorem~\ref{thm_glob_unc_Frob_norm}.
		\end{proof}
		
		\section{Uncertainty bounds for $d$-regular graphs}\label{sec_d_regular_graphs}
		
		In this section, we focus on (random) $d$-regular graphs and show how some known (de)localization results can be applied to obtain uncertainty bounds. Let $G_{n,d}$ denote a random $d$-regular graph chosen uniformly from all $d$-regular graphs on $|V| = n$ vertices. Delocalization properties of $d$-regular graphs were extensively studied over the recent years, e.g., \cite{bauerschmidt2017local,litvak2019structure,huang2021spectrum}. 
		The following supremum delocalization result was established in \cite{huang2021spectrum}.
		\begin{theorem}[\cite{huang2021spectrum}]\label{sup_d_graphs}
			For each $d>3$, there exist a constant $\omega_d\in \mathbb{N}$ and $c\in (0,1)$, such that any eigenvector $\varphi$ of the adjacency matrix of $G_{n,d}$ satisfies
			$$
			\|\varphi\|_{\infty} \le \frac{\log^{\Omega(1)} n}{\sqrt{n}} \|\varphi\|_2
			$$
			with probability at least $ 1 - O\left( n^{-(1-c)\omega_d}\right)$.
		\end{theorem}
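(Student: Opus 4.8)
\textbf{Remark on the proof of Theorem~\ref{sup_d_graphs}.}
The plan is to deduce the claimed $\ell^\infty$ bound from an entrywise \emph{local law} for the resolvent of the adjacency matrix $A$ of $G_{n,d}$, following the now-standard resolvent approach to eigenvector delocalization (cf.\ \cite{bauerschmidt2017local,huang2021spectrum}). Set $G(z) = (A - z)^{-1}$ for $z = E + i\eta$ with $\eta > 0$, and recall the spectral identity
$$\operatorname{Im} G_{ii}(\lambda + i\eta) \;=\; \sum_{\mu \in \Lambda} \frac{\eta\,|\varphi_\mu(i)|^2}{(\mu - \lambda)^2 + \eta^2} \;\ge\; \frac{|\varphi_\lambda(i)|^2}{\eta},$$
obtained by keeping only the summand $\mu = \lambda$. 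Hence $|\varphi_\lambda(i)|^2 \le \eta\,\operatorname{Im} G_{ii}(\lambda + i\eta)$ for every eigenpair $(\lambda, \varphi_\lambda)$ and every vertex $i$, so it suffices to show that, with probability $1 - O(n^{-(1-c)\omega_d})$, one has $|G_{ii}(z)| = O(1)$ uniformly over all $i \in V$ and all $z$ with $\operatorname{Im} z \asymp (\log n)^{C}/n$ for a suitable constant $C$. Taking $\eta = (\log n)^C/n$ above then yields $\|\varphi\|_\infty^2 \le O((\log n)^C/n)$, which is the assertion with the exponent in $\log^{\Omega(1)} n$ equal to $C/2$.

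To establish the local law down to the scale $\eta \sim (\log n)^C/n$, I would run a multiscale bootstrap on the control quantity $\Xi(z) := \max_i |G_{ii}(z) - m_d(z)|$ and its off-diagonal counterpart, where $m_d$ is the Stieltjes transform of the Kesten--McKay law, the correct ``free'' profile for $d$-regular graphs. At the macroscopic scale $\eta \asymp 1$ concentration is elementary; one then decreases $\eta$ by a constant factor at each step. The iterative input is a Schur-complement (minor-removal) identity expressing $G_{ii}$ through the resolvent of the graph with vertex $i$ deleted, which gives an approximate self-consistent equation of the form $G_{ii} \approx -(z + \text{tree term})^{-1}$; a stability analysis of the Kesten--McKay self-consistent equation near $m_d(z)$ then propagates the bound from the larger to the smaller scale, provided the fluctuation (remainder) terms are controlled with the required probability.

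The main obstacle --- and the source of every technical hypothesis in the statement --- is controlling those fluctuation terms in the absence of independence among the entries of $A$. The workhorse is the \emph{switching} (local resampling) method: one perturbs $G_{n,d}$ by swapping a bounded number of edges in a $d$-regularity-preserving way, shows that the relevant resolvent functionals are Lipschitz under such swaps, and averages over switchings to replace local neighbourhood statistics by their tree (mean-field) values. Concretely the steps are: (i) set up the switching operation and a local resampling of a ball of radius $O(\log_d n)$ around a fixed vertex, using the near-Ramanujan/expander structure of $G_{n,d}$ --- which forces a lower bound on $d$, here $d > 3$ --- to guarantee that this neighbourhood is tree-like with high probability; (ii) prove the stability estimate for the Kesten--McKay equation; (iii) feed (i) into the bootstrap of the previous paragraph, where the number of switchings used in (i) fixes the exponent $\omega_d$ in the concentration bound and hence the final probability $1 - O(n^{-(1-c)\omega_d})$; and (iv) conclude via the spectral identity of the first paragraph. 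Steps (ii) and (iv) are routine; the bulk of the work is the switching analysis in (i) and its interaction with the iteration in (iii).
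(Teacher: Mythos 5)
The paper does not prove this statement: it is imported verbatim from \cite{huang2021spectrum} and used as a black box, so there is no internal proof to compare against. Your sketch does correctly reconstruct the strategy of that reference (and of \cite{bauerschmidt2017local}): the reduction of $\ell^\infty$-delocalization to a diagonal resolvent bound via the identity $|\varphi_\lambda(i)|^2 \le \eta\,\operatorname{Im}G_{ii}(\lambda+i\eta)$ is exactly right, as is the identification of the two real difficulties --- pushing a local law down to scale $\eta \sim \log^{C}n/n$, and replacing independence of matrix entries by the switching/local-resampling argument that respects the $d$-regularity constraint.

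That said, what you have written is a roadmap rather than a proof: every step that actually carries weight is deferred. Concretely, (a) the switching-based concentration estimates (your step (i)) are the entire content of the cited paper and are not ``fed into'' the bootstrap by fiat --- in particular it is not explained how the exponent $\omega_d$ and the constant $c$ emerge quantitatively from the number of admissible switchings; (b) the stability analysis of the Kesten--McKay self-consistent equation is \emph{not} routine near the spectral edges $\pm 2\sqrt{d-1}$, where the density vanishes and the equation degenerates, yet the delocalization claim requires the diagonal resolvent bound at $E=\lambda$ for \emph{every} eigenvalue $\lambda$, including edge eigenvalues (handling the edge is precisely the advance of \cite{huang2021spectrum} over earlier bulk results); and (c) uniformity of the local law over all $E$ in the spectrum and all $\eta$ down to the target scale needs a net-plus-Lipschitz argument that is asserted rather than given. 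So the proposal is the correct approach in outline, but it does not constitute a proof of the theorem; as the paper itself treats the result as external input, citing \cite{huang2021spectrum} is the appropriate resolution.
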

		
		Using this result and Theorem \ref{thm_glob_unc_inf_norm}, we obtain the following uncertainty principle for $d$-regular graphs.
		
		\begin{corollary}\label{cor_dense_d-regular_uncertainty}
			For each $d>3$, the following uncertainty bound holds for any signal $f$ on $G_{n, d}$ with high probability
			$$
			\|f\|_0\|\hat f\|_0 \ge \frac{n}{\log^{\Omega(1)}n}.
			$$
		\end{corollary}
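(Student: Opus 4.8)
The plan is to combine the supremum delocalization bound of Theorem~\ref{sup_d_graphs} with the general uncertainty inequality of Theorem~\ref{thm_glob_unc_inf_norm}, the only real work being a union bound over the $n$ columns of $\Phi$.

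First, recall that by definition $\Vert \Phi \Vert_\infty = \max_{\ell \in \Lambda} \Vert \varphi_\ell \Vert_\infty$, and that since $\Phi$ is an orthonormal eigenbasis we have $\Vert \varphi_\ell \Vert_2 = 1$ for every $\ell \in \Lambda$. Applying Theorem~\ref{sup_d_graphs} to a single eigenvector $\varphi_\ell$ of $G_{n,d}$ thus gives $\Vert \varphi_\ell \Vert_\infty \le \log^{\Omega(1)} n / \sqrt{n}$ with probability at least $1 - O(n^{-(1-c)\omega_d})$.

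Next, I would take a union bound over all $\ell \in \Lambda$. Since $|\Lambda| = n$, the probability that $\Vert \varphi_\ell \Vert_\infty \le \log^{\Omega(1)} n / \sqrt{n}$ fails for at least one $\ell$ is $O\big(n \cdot n^{-(1-c)\omega_d}\big) = O\big(n^{1 - (1-c)\omega_d}\big)$; choosing the constant $\omega_d$ in Theorem~\ref{sup_d_graphs} large enough (depending on $d$, which fixes $c$) so that $(1-c)\omega_d > 1$, this tends to $0$. On the complementary high-probability event we therefore have $\Vert \Phi \Vert_\infty \le \log^{\Omega(1)} n / \sqrt{n}$ simultaneously for all columns, and hence
\[
\Vert \Phi \Vert_\infty^{-2} \ge \frac{n}{\log^{\Omega(1)} n}.
\]
Here it is also convenient to note that with high probability the spectrum of $G_{n,d}$ is simple, so that the eigenbasis $\Phi$ is unique up to signs and reordering of its columns and there is no ambiguity in speaking of ``the eigenvectors''.

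Finally, Theorem~\ref{thm_glob_unc_inf_norm} gives $\Vert f \Vert_0 \Vert \hat f \Vert_0 \ge \Vert \Phi \Vert_\infty^{-2}$ for every graph signal $f$, so on the same event $\Vert f \Vert_0 \Vert \hat f \Vert_0 \ge n / \log^{\Omega(1)} n$ for all $f$ at once. The main (and essentially only) technical point is the union bound: one needs the per-eigenvector failure probability $O(n^{-(1-c)\omega_d})$ from Theorem~\ref{sup_d_graphs} to beat the factor $n$, which is why we must take $\omega_d$ sufficiently large; the polylogarithmic factor is unaffected, since the union bound only changes the implicit constant hidden inside $\log^{\Omega(1)} n$.
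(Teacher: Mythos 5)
Your argument is exactly the paper's (the corollary is stated as an immediate consequence of combining Theorem~\ref{sup_d_graphs} with Theorem~\ref{thm_glob_unc_inf_norm}, with no further proof given), so the proposal is correct and takes the same approach. One small caveat: $\omega_d$ and $c$ are constants \emph{furnished} by Theorem~\ref{sup_d_graphs} rather than parameters you may choose, so your union bound only closes if $(1-c)\omega_d>1$ happens to hold; in the underlying result the high-probability event already covers all eigenvectors of the graph simultaneously, so the union bound is not actually needed.
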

		
		\begin{remark}
			We note that Theorem~\ref{sup_d_graphs} indicates strong supremum delocalization of random $d$-regular graphs with high probability. More precisely, the maximal coordinate of the eigenvectors has the same, up to a poly-log factor, absolute value as its average coordinate. For this reason, more refined Theorem~\ref{thm_glob_unc_Frob_norm} used together with different type delocalization results would not allow to significantly improve the $n$-dependence in the obtained uncertainty bound for $G_{n,d}$.
			
		\end{remark}

		The following theorem shows that, while most of the $d$-regular graphs have nice delocalization properties, there exist infinite families of such graphs with localized eigenvectors. 
		
		\begin{theorem}[\cite{alon2021high}]\label{localized_dreg}
			For every $d = p + 1$, with $p$ prime, and parameter $\alpha\in (0, 1/6)$,  there are infinitely many integers $n$, such that there exists a $(d+1)$-regular graph $G_n$ on $n$ vertices with the following property. There are sets $T_n\subset V_n$ with $\vert T_n \vert = O(n^\alpha)$ and $S_n\subset \Lambda_n$ with $\vert S_n \vert = \lfloor\alpha\log_d(n) \rfloor$, such that $\supp(\varphi_\ell)\subset T_n$ for all $\ell \in S_n$.
		\end{theorem}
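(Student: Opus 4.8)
The plan is to exploit the elementary but decisive \emph{extension‑by‑zero principle}: if $T\subseteq V(G)$ and $\psi\colon T\to\C$ satisfies $A_{G[T]}\psi=\mu\psi$ together with $\sum_{w\in N(v)\cap T}\psi(w)=0$ for every $v\notin T$, then extending $\psi$ by zero gives a genuine eigenvector of $A_G$, with eigenvalue $\mu$ and support contained in $T$. (This is immediate from evaluating $(A_G\psi)(v)$ separately for $v\in T$ and $v\notin T$ and using that $\psi$ vanishes off $T$.) So it suffices to build a $(d+1)$‑regular graph $G_n$ on $n$ vertices containing a small induced subgraph that carries at least $\lfloor\alpha\log_d n\rfloor$ linearly independent such ``interior'' eigenvectors, and then to choose the eigenbasis $\Phi$ so that it contains them.

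For the gadget I would first fix a connected graph $C$ on $m=\Theta(n^{\alpha})$ vertices (with $m$ in a suitable residue class, possibly with a second deficient vertex to fix a parity of the degree sum) that is $(d+1)$‑regular except at one distinguished vertex $b$ of degree $d$; such $C$ exists once $m$ is large. Take $d$ vertex‑disjoint copies $C_1,\dots,C_d$ of $C$, introduce a new vertex $\rho$, and join $\rho$ by a single edge to the distinguished vertex $b_j$ of each $C_j$. The resulting graph $\Gamma_0$ is connected and $(d+1)$‑regular except that $\rho$ has degree $d$: the edge $\rho b_j$ restores $b_j$ to degree $d+1$, and crucially it does \emph{not} change the eigenvalue equation at $b_j$ relative to the equation inside $C_j$. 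Finally I would complete $\Gamma_0$ to a connected $(d+1)$‑regular graph $G_n$ on $n$ vertices by using the last half‑edge at $\rho$ to attach any connected $(d+1)$‑regular‑with‑one‑deficient‑vertex graph $Y$ on $n-1-dm$ vertices; such $Y$ exists for all sufficiently large $n$ in an appropriate residue class, which produces infinitely many admissible $n$. Set $T_n:=V(C_1)\cup V(C_2)$, so $|T_n|=2m=O(n^{\alpha})$.

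The localized eigenvectors then come for free. For each eigenvector $\xi$ of $A_C$ with eigenvalue $\mu$, define $\psi_\xi$ on $G_n$ to be $\xi$ on $C_1$, $-\xi$ on $C_2$, and $0$ everywhere else. Since $G_n[T_n]=C_1\sqcup C_2$, one checks $A_{G_n[T_n]}\psi_\xi=\mu\psi_\xi$ (using that $A_C\xi=\mu\xi$ holds at the deficient vertex $b$ as well), and the only vertex outside $T_n$ adjacent to $T_n$ is $\rho$, for which $\sum_{w\in N(\rho)\cap T_n}\psi_\xi(w)=\xi(b)-\xi(b)=0$. By the extension‑by‑zero principle, $\psi_\xi$ is an eigenvector of $A_{G_n}$ with support inside $T_n$. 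As $\xi$ runs over an orthonormal eigenbasis of $A_C$, the $\psi_\xi$ are linearly independent (they are distinguished by their restrictions to $C_1$), so they span a subspace $U$ of dimension $m$ that is a direct sum of subspaces of eigenspaces of $A_{G_n}$. Choosing an orthonormal eigenbasis $\Phi$ of $A_{G_n}$ that refines $U$, and letting $S_n$ index any $\lfloor\alpha\log_d n\rfloor\le m$ of the basis vectors lying in $U$, we obtain $\supp(\varphi_\ell)\subseteq T_n$ for every $\ell\in S_n$, which is exactly the assertion.

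For the statement precisely as written I expect essentially no obstacle: it reduces to the one‑line principle plus the explicit construction above, and neither primality of $p=d-1$ nor the restriction $\alpha<1/6$ is actually needed. Those hypotheses enter only if one additionally demands (as in the source) that $G_n$ be near‑Ramanujan with girth $\Omega(\log n)$ — a strengthening whose point is to show that even such ``well‑behaved'' graphs can host localized eigenvectors. Then the localizing gadget must instead be grafted onto an \emph{explicit} high‑girth near‑Ramanujan base (an LPS‑type graph, which is where $p$ prime appears) while controlling how much the grafting shortens the girth and perturbs the spectrum; it is precisely this control that limits the gadget to size $n^{\alpha}$ with $\alpha<1/6$ and the number of localized eigenvectors to $\alpha\log_d n$. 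That girth/spectral bookkeeping is the genuinely hard step of the refined statement, whereas the localization mechanism itself is the soft argument sketched here.
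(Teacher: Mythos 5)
The paper does not prove this theorem; it is imported verbatim from \cite{alon2021high}, so there is no internal argument to compare against. Your construction is essentially correct for the statement \emph{as transcribed here}: the extension-by-zero principle is right, the two antisymmetric copies $(\xi,-\xi)$ of the gadget $C$ hanging off the hub $\rho$ do produce $m$ pairwise orthogonal eigenvectors of $A_{G_n}$ supported on $T_n=V(C_1)\cup V(C_2)$ (the cancellation $\xi(b)-\xi(b)=0$ at $\rho$ is exactly what is needed, and $U$ is a sum of subspaces of eigenspaces, so an orthonormal eigenbasis containing $\lfloor\alpha\log_d n\rfloor\le m$ of them exists). The only loose ends are the parity/existence bookkeeping for $C$ and for the completion $Y$ (a single deficient vertex forces $(d+1)m$ odd, so for $d$ odd you genuinely need the two-deficient-vertex variant with a second hub, and the cancellation argument must be repeated there); you flag this but do not carry it out. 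More importantly, your closing diagnosis is accurate and worth stating explicitly: the version recorded in this paper omits the near-Ramanujan spectral gap and $\Omega(\log n)$ girth that are the actual content of \cite{alon2021high}, and without them the statement admits the soft construction you give, with the hypotheses ``$p$ prime'' and ``$\alpha<1/6$'' playing no role. So your proof establishes what the paper literally asserts (and also the downstream remark that the localized eigenvectors share a common support set, hence no box delocalization), but it is not a proof of the cited theorem in its intended strength; for that, the grafting must be performed on an explicit high-girth near-Ramanujan base with control of the spectral and girth perturbation, which is the hard part you correctly set aside.
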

		
		By selecting $\alpha = \frac{1}{\log_d(n)}$, so that $\vert S_n\vert = 1$, we obtain that for any signal $f$ defined on a graph $G_n$ described in Theorem~\ref{localized_dreg}, $$
		\|f\|_0\|\hat f\|_0 \gtrsim n^{\frac{1}{\log_d(n)}} = d,
		$$
		and this bound is attained when $f = \varphi_\ell$ with $\ell\in S_n$. 
		We note that the proof of Theorem~\ref{localized_dreg} produces eigenvectors $\varphi_\ell$, $\ell\in S_n$ with the additional property of partial localization on the same index set.
		That is, these graphs also do not enjoy box delocalization.

		\section{Numerical bounds on graph uncertainty}\label{sec_numerical_bounds}
		Another approach to quantifying uncertainty via Theorem~\ref{thm_glob_unc_Frob_norm} is by directly estimating the quantity
		\begin{equation}\label{eq_uncertainty_bound_Frobenious}
			B = \min_{S\subset \Lambda,~K\subset V}\left\lbrace|K| |S| \colon \Vert \Phi_{K\times S}\Vert _F\ge 1\right\rbrace
		\end{equation}
		for a given (non-random) graph. It is a binary non-convex quadratic optimizartion problem that can be formulated as
		\begin{align}\label{opt_problem}
			\min\limits_{x_i, y_j} &\sum_{i,j=1}^n x_i y_j \nonumber \\
			\text{s.t. } &\sum_{i,j=1}^n \Phi^2_{ij}x_i y_j
			\ge 1, \\
			&x_i, y_j \in \{0,1\} \text{ for } i, j = 1,\ldots, n \nonumber\end{align}
		Note that the problem above can be reduced to $n^2$  decision problems of the existence of a submatrix of given size with Frobenious norm at least $1$. In turn, this is equivalent to the Weighted Edge Biclique Decision Problem on
		complete bipartite graphs that is known to be NP-complete \cite{pandey2020maximum}. Several recent papers propose approximate algorithms for searching for large bicliques under assumptions on the graph structure, including \cite{chen2022efficient, chen2021efficient}, and even approximate optimization is shown to be NP-hard in some cases \cite{tan2008inapproximability}. 
		
		Below, we employ some simple relaxation techniques to investigate empirically the dependence of $B$ on the graph size $n$ for different graph models. These bounds serve two goals: (a) they form hypotheses about the dependence $B(n)$; and (b) they provide a way to obtain  estimates for uncertainty bound for a given graph. 
		
		\begin{itemize}[leftmargin=*]
			\item \textbf{Supremum lower bound.} Clearly, the supremum-based lower bound of Theorem~\ref{thm_glob_unc_inf_norm} is very weak -- it would be tight only if all the entries of the matrix have the same absolute values. We still include it for comparison: for any $K\times S\subset V\times\Lambda$, we have $\Vert \Phi_{K\times S}\Vert _F^2 \le |K\Vert S\Vert |\Phi\Vert _{\infty}^2$. Thus, $$B \ge \Vert \Phi\Vert _{\infty}^{-2}.$$
			
			\item \textbf{Greedy lower bound.} Accounting for the sizes of the elements but not their locations, we obtain 
			\begin{equation*}\label{eq_lower_bound}
				B \ge \min \left\lbrace m \colon \sum_{j=1}^m \tilde\Phi^2(j)\ge 1\right\rbrace,
			\end{equation*} where $\tilde{\Phi}^2\in \mathbb{R}^{n^2}$ consists of the squared elements of $\Phi$ in descending order.
			\item \textbf{Linear lower bound.} A standard linear relaxation of a bilinear optimization problem on a bounded domain \cite{mccormick1976computability} leads to the following problem
			\begin{align}\label{opt_problem_linear_relax}
				\min\limits_{U, x, y} &\sum_{i,j=1}^n u_{ij} \nonumber \\
				\text{s.t. } &\sum_{i,j=1}^n \Phi^2_{ij} u_{ij}
				\ge 1, \; 0 \le x_i, y_j \le 1  \\
				& \max\{0, x_i + y_j - 1\} \le u_{ij}  \le \min\{x_i, y_j\}
				\nonumber
			\end{align}
			for the decision variables $U \in \mathbb{R}^{n \times n}$ and $x, y \in \mathbb{R}^n$.
			\item \textbf{Semidefinite lower bound.} 
			Our approach for an SDP relaxation follows \cite{anstreicher2009semidefinite}, after symmetrization of the problem~\eqref{opt_problem}, and leads to the following problem:
			\begin{align}\label{opt_problem_sdp_relax}
				\min\limits_{U, x} &\sum_{i,j=1}^{2n} J_{ij} u_{ij} \nonumber \\
				\text{s.t. } &\sum_{i,j=1}^{2n} \bar \Phi^2_{ij} u_{ij}
				\ge 1, \\
				& u_{ii} - x_i = 0, \; 0 \le x_i \le 1  \nonumber\\
				& \max\{0, x_i + x_j - 1\} \le u_{ij}  \le \min\{x_i, x_j\}
				\nonumber \\
				& U - x x^T \succcurlyeq 0  \nonumber
			\end{align}
			for the decision variables $U \in {\rm Sym}(2n \times 2n)$ and $x  \in \mathbb{R}^{2n}$.
			The last constraint is semidefinite by a standard Schur complements representation.  The matrices $J$ and $\bar \Phi^2$ are symmetrized versions of an all-ones matrix and the matrix $\Phi^2$, respectively, e.g., 
			$$\bar{\Phi} = 0.5 
			\begin{bmatrix} \textbf{0}_{n \times n} &\Phi\\
				\Phi^T & \textbf{0}_{n \times n}
			\end{bmatrix}, \; (\bar{\Phi}^2)_{i,j} = (\bar{\Phi}_{i,j})^2.
			$$
			\item \textbf{$k$-width upper bound.}  We upper-bound the optimal solution of \eqref{opt_problem} by a solution of the problem with additional constraint $$\max\left\{\sum x_i, \sum y_i\right\} \leq k.$$ In other words, we narrow down the set of submatrices in \eqref{eq_uncertainty_bound_Frobenious} to the submatrices with at most $k$ rows or columns and do an exhaustive search. For each possible subset of $k$ rows (columns), we take as many of the largest partial row-sums as needed to get at least $1$ in total.  The complexity of this problem is $O(kn^{k+2}\ln(n))$, so it is feasible only for small fixed $k$. Below, we take $k = 3$ to form an upper bound.
		\end{itemize}
		
		We consider two graph models to test the bounds described above: a $2 \lceil\sqrt{n}\rceil$-regular graph on $n$ vertices (Figure~\ref{fig:regular-graph}) and the same regular graph modified by adding high-degree vertices. More precisely, $\lceil 0.5n\rceil$ random edges are added to $\lceil 0.1 n\rceil$ randomly selected vertices (Figure~\ref{fig:modified-graph}). The first model satisfies theoretical delocalization results outlined in Section \ref{sec_d_regular_graphs} and we expect linear in $n$ (up to a logarithm) bound for uncertainty. The second model is constructed to disrupt the ``like-uniform" structure of the entries in the adjacency eigenvector matrix (see their order statistics in Figure~\ref{fig:eigenvec_decay}).
		We use it as an example where the uncertainty bound could grow sublinearly with the graph size. 
		
		Let us briefly discuss how these bounds address two goals mentioned above. Asymptotically, our bounds support the conjecture above, showing linear growth of $B(n)$ with $n$ in Figure~\ref{fig:regular-graph} and sublinear growth of $B(n)$ in Figure~\ref{fig:modified-graph} for the non-regular (modified) graph. For the individual uncertainty estimates, we see that there is still a considerable gap between the lower and upper bounds (illustrated as a cyan region in Figures~\ref{fig:regular-graph}~and~\ref{fig:modified-graph}), but more sophisticated relaxations give better lower bounds. In particular, a supremum-based lower bound, that is used for known asymptotics (see Section~\ref{sec_d_regular_graphs}), is not competitive on larger graphs. 
		
		Our code is available at \\\url{https://github.com/erebrova/uncertainty-delocalization}. 
		\begin{figure}[t]
			\centering\includegraphics[width=0.9\linewidth]{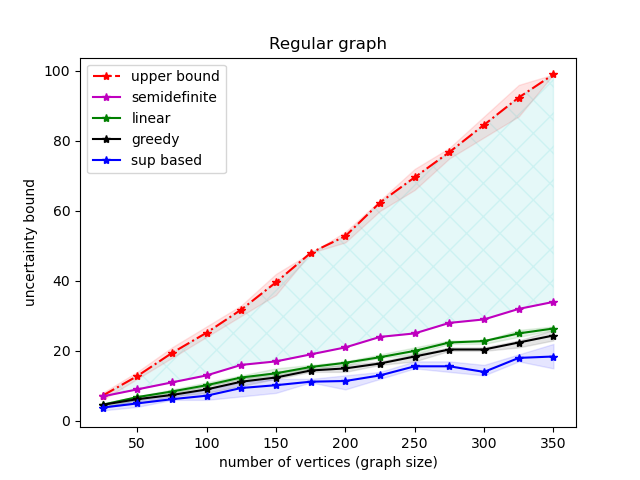} 
			\caption{Upper and lower estimates for the uncertainty bound \eqref{eq_uncertainty_bound_Frobenious} on $G_{n, 2 \lceil\sqrt{n}\rceil}$, averaged over $5$ runs (semidefinite lower bound from $1$ run). True value of $B$ is in the cyan region. All the bounds suggest linear growth of $B$ with $n$.}
			\label{fig:regular-graph}
		\end{figure}
		
		\begin{figure}[t]
			\centering\includegraphics[width=0.9\linewidth]{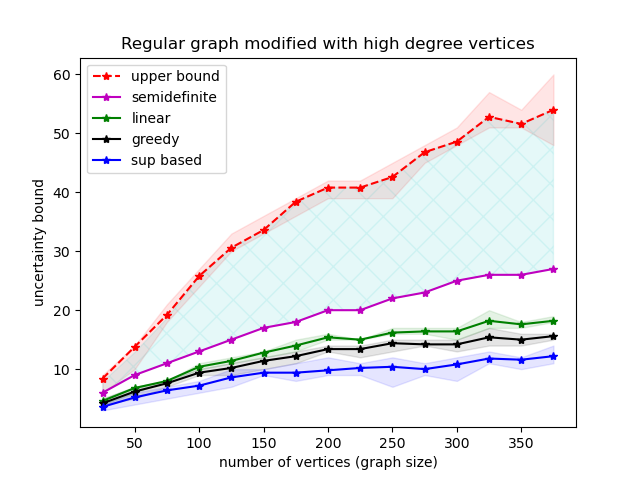} 
			\caption{Upper and lower estimates for the uncertainty bound \eqref{eq_uncertainty_bound_Frobenious} on $G_{n, 2 \lceil\sqrt{n}\rceil}$ modified by added high-degree vertices, averaged over $5$ runs (semidefinite lower bound from $1$ run). True value of $B$ is in the cyan region. The bounds suggest sublinear growth of $B$ with $n$.}
			\label{fig:modified-graph}
		\end{figure}
		
		\begin{figure}[h]
			\centering\includegraphics[width=0.9\linewidth]{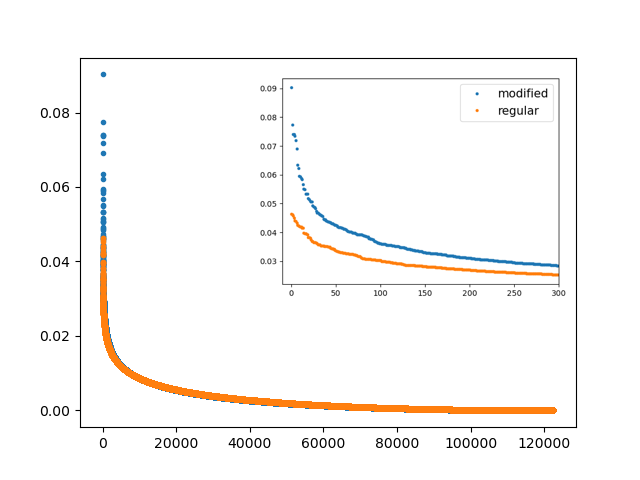} 
			\caption{Ordered entries of the entrywise squared absolute values of eigenvector matrices for $G_{n, 2 \lceil\sqrt{n}\rceil}$ and modified $G_{n, 2 \lceil\sqrt{n}\rceil}$ with ${n = 350}$ (upper right figure is zoomed on the top $300$ entries).}
			\label{fig:eigenvec_decay}
		\end{figure}

		\section{Discussion}
		The study of uncertainty principle is not only useful in graph signal processing applications, but it also stems a variety of interesting mathematical questions in spectral graph theory, optimization, and random matrix theory, and promotes connections between these diverse areas. In particular, the findings of this paper motivate the study of different, more refined, types of eigenvector delocalization that rule out simultaneous concentration of different eigenvectors on the same coordinate subsets (see, for instance, Corollary~\ref{cor: box deloc}). Moreover, study of the eigenvector delocalization for new families of (random) graphs with varying combinatorial properties, such as degree distribution and girth (e.g. \cite{slanina2012localization, huang2021spectrum}), would allow to establish a link between these properties and graph uncertainty.  
		
		In practice, an important problem is efficient numerical estimation of the uncertainty bound for a given graph. As discussed in Section~\ref{sec_numerical_bounds}, semidefinite relaxation comes with significant time requirements (the complexity is polynomial of large degree, as the input size is $n^2$, see~\eqref{opt_problem_sdp_relax}). Thus, for larger graphs, scalable approximation techniques (e.g., \cite{yurtsever2021scalable}) or dual reformulations could be considered in the future work.
		
		\section*{Acknowledgements}
		The authors are grateful to Bartolomeo Stellato and Ivan Kryven for inspiring and helpful discussions. Palina Salanevich is partially supported by NWO Talent programme Veni ENW grant, file number VI.Veni.212.176.

		\printbibliography
		
	\end{document}